\newcounter{counter} \newtheorem{example}[counter]{Example}
\newcounter{counter2} \newtheorem{lemma}[counter2]{Lemma}
\newcounter{counter3} \newtheorem{theorem}[counter3]{Theorem}
\newcounter{counter4} \newtheorem{corollary}[counter4]{Corollary}
\newcommand{\highlight}[1]{#1}
\begin{document}

\bibliographystyle{plain}

\title{On the Construction of High Dimensional Simple Games}

\author{Martin Olsen\\
BTECH, Aarhus University\\
Denmark\\
martino@btech.au.dk
\and
Sascha Kurz\\
University of Bayreuth\\
Germany\\
sascha.kurz@uni-bayreuth.de
\and
Xavier Molinero\thanks{Research partially
funded by Grant MTM 2012-34426 from the Spanish Economy and Competitiveness Ministry.}\\
Universitat Polit\`ecnica de Catalunya\\
Spain\\
xavier.molinero@upc.edu
}

\date{}

\maketitle

\begin{abstract}
\noindent
Voting is a commonly applied method for the aggregation of the preferences
of multiple agents into a joint decision. If preferences are binary, i.e.,
``yes'' and ``no'', every voting system can be described by a (monotone) Boolean
function $\chi\colon\{0,1\}^n\rightarrow \{0,1\}$. However, its naive
encoding needs $2^n$ bits. The subclass of threshold functions, which is sufficient
for homogeneous agents, allows a more succinct representation using $n$ weights and
one threshold. For heterogeneous agents, one can represent $\chi$ as an
intersection of $k$ threshold functions. Taylor and Zwicker have constructed a sequence
of examples requiring $k\ge 2^{\frac{n}{2}-1}$ and provided a construction guaranteeing
$k\le {n\choose {\lfloor n/2\rfloor}}\in 2^{n-o(n)}$. The magnitude of the worst-case situation was
thought to be determined by Elkind et al.~in 2008, but the analysis unfortunately turned
out to be wrong. Here we uncover a relation to coding theory that allows the determination
of the mi\-nimum number $k$ for a subclass of voting systems. As an application, we give
a construction for $k\ge 2^{n-o(n)}$, i.e., there is no gain from a representation complexity
point of view.  

\medskip

\noindent
\textbf{Keywords:} simple games, weighted games, dimension, coding theory, Hamming distance\\
\textbf{MSC:} 91B12, 91A12, 68P30 \\
\end{abstract}

\section{Introduction}

Consider a set $N=\{1,\dots,n\}$ of agents whose binary preferences should be aggregated
to a group decision. We assume that voting is used as aggregation method, i.e., each agent
can say either ``yes'' or ``no'', which we encode by $1$ and $0$, respectively, to a given proposal.
The group decision is then an ``accept'' (1) or ``reject'' (0). Formally, the used voting system
can be modeled as a Boolean function $\chi\colon\{0,1\}^n\rightarrow \{0,1\}$. By imposing some,
quite natural, additional constraints, we obtain the class of so-called \emph{simple games}, see
Subsection~\ref{subsec_simple_games}. They are widely applied and very useful tools for understanding
decision making in political and other contexts. One major drawback is that they do not admit an obvious
succinct representation. The naive approach, listing the function values of $\chi$, needs $2^n$ bits.
Listing so-called \emph{minimal winning coalitions}, see Subsection~\ref{subsec_simple_games}, also
needs $2^{n-o(n)}$ items in the worst case.

However, the subclass of threshold functions of monotone Boolean functions or \emph{weighted games} of
simple games, see Subsection~\ref{subsec_simple_games}, can be represented by just $n$ integer weights
$w_i$, for $i\in\{1,\ldots,n\}$, and an integer  threshold/\emph{quota} $q$. If a representation as a threshold function/weighted
games is possible, at most $O(n\log n)$ bits are needed for each integer~\cite{0243.94014}. In the case
of homogeneous agents or players, May's theorem~\cite{may1952set} states that we can choose $w_i=1$ and
$1\le q\le n$, i.e., a very succinct representation is possible. In the case of heterogeneous agents or
players there are unfortunately simple games which are not weighted games if $n\ge 4$. However, Taylor
and Zwicker have constructively shown that each simple game can be represented as the intersection of at
most ${n\choose {\lfloor n/2\rfloor}}\in 2^{n-o(n)}$ weighted games, where the weights are either $0$ or
$1$ and the quotas are $1$, see Subsection~\ref{subsec_simple_games}. The smallest number $k$ such that
a given simple game $\Gamma$ can be represented as the intersection of $k$ weighted games is called the
\emph{dimension} of $\Gamma$. From a representation complexity point of view, we have the following important
questions:
\begin{center}
  How large can the dimension of a simple game be?

  And how can the corresponding weighted games be constructed?
\end{center}

\subsection{Related Work}
With respect to the first question, Taylor and Zwicker provided a sequence of examples requiring at
least $k\ge 2^{\frac{n}{2}-1}$ weighted games \cite{TaZw99}. So, there is a large gap in the knowledge of the
magnitude of the worst-case situation, which was thought to be closed by Elkind et al.~in 2008, see \cite{ElkindGGW08}.
Unfortunately, their analysis is flawed, which we will demonstrate in Section~\ref{TZEx}.

Taylor and Zwicker made the observation that although there are simple games with arbitrarily large dimension,
they do not seem to be used in real-world voting systems. At the time of writing \cite{TaZw99}, the authors were
only aware of practical voting systems with a dimension of at least $2$. Classical examples of dimension $2$ are given
by the Amendment of the Canadian constitution \cite{Kilgour1983} and the US federal legislative system \cite{Taylor1993170}.
The voting systems of the Legislative Council of Hong Kong and the Council of the European Union under its
Treaty of Nice rules have a dimension of exactly three, which was proven in \cite{Cheung2014,Freixas2004415},
respectively. Quite recently, it has been shown that the voting system of the Council of the European Union under its
Treaty of Lisbon rules has a dimension between $7$ and $13\,368$ \cite{Kurz2015}. There, the authors also address
the second question by providing heuristic algorithms based on integer linear programming. Besides that, the probably
first published general approach for the determination of lower bounds for the dimension of a simple game is presented.

Instead of the intersection, each simple game can also be represented as a finite union of weighted games,
which leads to the notion of \emph{co--dimension}, see e.g.~\cite{NotionsOfDimension}. Allowing arbitrary
combinations of unions and intersections results in the concept of a \emph{Boolean dimension}, which is
introduced and studied in \cite{BooleanCombinations}. We remark that the voting system of the Council of the European Union
under Lisbon rules has a Boolean dimension of exactly three \cite{Kurz2015}. For the subclass of so-called
complete simple or linear games, the dimension was studied in \cite{Freixas2001265}.

\subsection{Our Contribution}

We show up a link between the dimension of simple games and co\-ding theory.
More precisely, we give a construction of a simple game from an error-correcting code,
including the determination of the corresponding exact dimension. Using results
on error-correcting codes, we can conclude the existence of simple games whose dimension
asymptotically matches the worst-case upper bound $2^{n-o(n)}$ of Taylor and Zwicker~\cite{TaZw99},
i.e., we close the gap in the literature that was previously filled by the flawed
result of Elkind et al. \cite{ElkindGGW08}.

We thoroughly discuss the lower bound construction of Taylor and Zwicker, i.e, we determine
the corresponding exact dimension. Curiously enough, \highlight{just the integer weights $0$, $1$, and $2$}
are needed for the used weighted games. It turns out that Elkind et al.
considered an isomorphic variant of the example of Taylor and Zwicker.

\subsection{Outline}

The remaining part of this paper is structured as follows:
%

In Section~\ref{sec:preliminaries}, we introduce some notation and
formally define the considered concepts in the paper. We also state
a well-known theoretical upper bound for the dimension. 
Section~\ref{TZEx} shows that the example given by Elkind et al. \cite[Theorem 8]{ElkindGGW08}
is an isomorphic variant of the example given by Taylor and Zwicker \cite[Theorem 1.7.5]{TaZw99}.
The games that form the basis for our results are
introduced in Section~\ref{sec:construction}. Finally,
Section~\ref{sec:hdproof} contains the proofs of high dimension and
a theorem that forms the main contribution of the paper.

\section{Preliminaries}\label{sec:preliminaries}

We will start by briefly introducing error-correcting codes for readers not
familiar with coding theory, see e.g.\ \cite{berlekamp2015algebraic} for a
more comprehensive  introduction. In the second part of this section, we list
the basic notation and definitions of simple games and their dimension. Here
we refer the interested reader to \cite{TaZw99}.

\subsection{Error-Correcting Codes}
\label{subsec_error_correcting}

The \emph{Hamming weight} $hw(x)$ of a bit vector $x = x_1x_2 \ldots x_n \in\{0,1\}^n$ is the number of
$1$-bits in $x$: $hw(x) = |\{i: x_i = 1\}|$. The \emph{Hamming distance}
$d(x,y)$ between two bit vectors $x$ and $y$ is the number of bit
positions, where the bits in $x$ and $y$ are different: $d(x,y) =
|\{i: x_i \neq y_i\}|$.

Imagine a situation in which a $4$ bit message has to be transmitted
from a sender to a receiver in a noisy environment, where bits are
risking to be flipped during the transmission. By adding extra bits
to the message in a clever way, we can recover the original message
if a few bits are flipped. One way of doing this is by using the well-known
Hamming[$8$,$4$] code, where $4$ bits are added as illustrated
by the following example:

\begin{example} \label{ex:hamming}
The Hamming[$8$,$4$] code is essentially the following set $\mathcal{H}$ of bit
vectors:
$$\mathcal{H} = \{0000\ 0000 , 0001\ 1110, 0010\ 0111, 0011\ 1001,$$
$$0100\ 1011, 0101\ 0101, 0110\ 1100, 0111\ 0010,$$
$$1000\ 1101, 1001\ 0011, 1010\ 1010, 1011\ 0100,$$
$$1100\ 0110, 1101\ 1000, 1110\ 0001, 1111\ 1111\}$$
The set $\mathcal{H}$ contains $16$ vectors -- one vector for each possible $4$
bit message, where the message is the first $4$ bits of a vector. The
$4$ extra bits make it possible to recover a message when bits are
flipped.

The Hamming distance between any two vectors in $\mathcal{H}$ is at least
$4$. This means that we can recover a message if one bit is flipped
by locating the only vector in $\mathcal{H}$ with Hamming distance $1$ to
the received message. If two bits are flipped, we can only detect
that something bad has happened. This is a so-called single-error
correcting and double-error detecting code -- a SECDED code.

Let $\mathcal{C}_8 = \mathcal{H} \setminus \{0000\ 0000 , 1111\
1111\}$ denote the subset of $\mathcal{H}$ consisting of the $14$
bit vectors with Hamming weight $4$. The code $\mathcal{C}_8$ is
referred to as a {\em constant weight code}, since all the members of
$\mathcal{C}_8$ has the same Hamming weight. We will refer to
$\mathcal{C}_8$ several places in the paper.
\end{example}

\subsection{Simple Games and their Dimension}
\label{subsec_simple_games}

A simple game $\Gamma=(N, W)$ is a pair where $N = \{1,\dots,n\}$, for
some positive integer $n$, denotes the set of players or agents and
$W$ is a collection of subsets of $N$, i.e., $W\subseteq 2^N$, satisfying
the following conditions:
\begin{enumerate}
  \item[(1)] $\emptyset\notin W$;
  \item[(2)] $N\in W$;
  \item[(3)] $S\subseteq T\subseteq N$ and $S\in W$ implies $T\in W$.
\end{enumerate}
A {\em coalition} $S$ is a subset of $N$. If $S\in W$, then
it is called winning; otherwise, it is said to be losing.

The relation to a Boolean function $\chi\colon \{0,1\}^n\rightarrow \{0,1\}$ is given as follows:
Let $S$ be the set of coordinates of the input vector $x$ that are equal to $1$, i.e., all players
that vote ``yes''. The players in $N\backslash S$ vote ``no''. If $\chi(x)=1$, then $S$ is winning;
otherwise, it is losing.

Conditions (1) and (2) ensure that the group decision does not contradict the individual preferences
in the case of unanimity. The monotonicity condition (3) models the assumption that an enlarged set of supporters
should not turn the group decision from an acceptance into rejection, which is quite reasonable. So a simple
game $\Gamma$ corresponds to a monotone Boolean function $\chi$ with the extra conditions $\chi(\mathbf{0})=0$ and
$\chi(\mathbf{1})=1$.

Clearly, a simple game $\Gamma$ is uniquely characterized by either its set $W$ of winning or its set $L$ of losing
coalitions, which may both be as large as $2^n-1$ in general. A first reduction is possible: A coalition $S$ is called
\emph{minimal winning} if it is winning and all of its proper subsets are losing. Similarly, a coalition $T$ is called
\emph{maximal losing} if it is losing and all of its proper supersets are winning.
The family consisting of all minimal winning coalitions is denoted by $W^m$ and the family of all maximal losing coalitions
is denoted by $L^M$. Since no minimal winning coalition is a proper subset of another minimal winning coalition, we can
apply Sperner's Lemma, see e.g.~\cite{Lubell1966}, to conclude $\left|W^m\right|\le {n \choose {\lfloor n/2\rfloor}}$.
Similarly, we conclude $\left|L^M\right|\le {n \choose {\lfloor n/2\rfloor}}$.

A simple game
$\Gamma=(N,W)$ is weighted if there exists a \emph{quota} $q \in \mathbb{R}_{>0}$ and
\emph{weights} $w_1, w_2, \ldots , w_n \in \mathbb{R}_{\ge 0}$ such that $S \in W$
if and only if $\sum_{i \in S} w_i \geq q$. We remark that one can require the weights
and the quota to be non-negative integers~\cite{Freixas1997}. The intersection
$(N, W_1) \cap (N, W_2)$ of two simple games is the simple game $(N, W_1 \cap W_2)$.
Taylor and Zwicker~\cite{TaZw99} have shown that any simple game can be written
as the intersection of $|L^M|$ weighted games $\Gamma_T$, $T \in L^M$, where a coalition
$S$ wins in $\Gamma_T$ if $S \cap (N\setminus T) \neq \emptyset$. A weighted representation
using weights $0$ and $1$ is given as follows: A player in $N \setminus T$ has weight
$1$ and all other players have weight $0$ in the game $\Gamma_T$ that has quota $1$.

The dimension $d$ of a simple game $\Gamma$ is the smallest positive
integer such that $\Gamma = \cap_{i = 1}^d \Gamma_d$, where the games
$\Gamma_i$, $i \in \{ 1, 2, \ldots, d\}$, are weighted. From the previous
considerations we conclude
\begin{equation}
\label{eq:d-upperbound}
d \leq |L^M| \leq \min \left(2^n - |W|, {n \choose \lfloor \frac{n}{2} \rfloor}\right) \enspace .
\end{equation}
To give an intuition of how this upper bound relates to $2^n$, we
can use the the following double inequality that holds for all even
positive integers $n$~\cite{Luke69}:
\begin{equation}
\label{eq:sperner}
\sqrt{\frac{2}{\pi n}} \left(1-\frac{1}{4n}\right) 2^n \leq {n \choose \frac{n}{2}} \leq \sqrt{\frac{2}{\pi n}} \left(1-\frac{2}{9n}\right) 2^n .
\end{equation}
For all odd positive integers $n$, we can use the equality ${n \choose \lfloor \frac{n}{2} \rfloor} =
{n-1 \choose \frac{n-1}{2}} \frac{2n}{n+1}$ and obtain the
following inequalities:
\begin{equation}
\label{eq:sperner2}
{n \choose \lfloor \frac{n}{2} \rfloor} \geq \frac{n}{n+1} \sqrt{\frac{2}{\pi (n-1)}} \left(1-\frac{1}{4(n-1)}\right) 2^{n} \enspace ,
\end{equation}
\begin{equation}
\label{eq:sperner3}
{n \choose \lfloor \frac{n}{2} \rfloor} \leq  \frac{n}{n+1}\sqrt{\frac{2}{\pi (n-1)}} \left(1-\frac{2}{9(n-1)}\right) 2^{n} \enspace .
\end{equation}

For a bit vector $x = x_1x_2 \ldots x_n \in \{0, 1\}^n$ with $n$
bits, we let $S_x$ be the coalition where $i \in S$ if and only $x_i
= 1$. For a coalition $S \subseteq N$, we define the bit vector $x_S$
accordingly. We use the notation $\bar{x}$ and $\bar{S}$ for
complements for bit vectors and sets, respectively.

\section{The Example of Taylor and Zwicker}
\label{TZEx}

Let us reconsider the construction of a simple game with large dimension from \cite[Theorem 1.7.5]{TaZw99}.
To this end, let $k$ be an odd integer, \highlight{$S=\{1,\dots,k\}$}, \highlight{$T=\{k+1,\dots,2k\}$}, and $N=S\cup T$. A coalition
$\highlight{X}\subseteq N$ is winning iff either $|X|\ge k+1$ or $|X|=k$ and $|X\cap T|\equiv 0\pmod 2$.
Denote the corresponding simple game by $\Gamma_k$. The minimal winning coalitions of $\Gamma_k$ are given by $W^m=$
$$
  \Big\{X_1\cup X_2\mid X_1\subseteq S, X_2\subseteq T, \left|X_2\right|\equiv 0\!\!\!\!\pmod 2, \left|X_1\cup X_2\right|=k\Big\}
$$
and the maximal losing coalitions of $\Gamma_k$ are given by $L^M=$
$$
  \Big\{X_1\cup X_2\mid X_1\subseteq S, X_2\subseteq T, \left|X_2\right|\equiv 1\!\!\!\!\pmod 2, \left|X_1\cup X_2\right|=k\Big\}.
$$
Since $k\equiv 1\pmod 2$, we have $n\equiv 2\pmod 4$ for $n=2k=|N|$ and $\left|W^m\right|=\left|L^M\right|=
\frac{1}{2}\cdot {n\choose n/2}$, so that the dimension of $\Gamma_k$ is at most $\frac{1}{2}\cdot {n\choose n/2}$. We remark
that $\Gamma_k$ is self-dual, so that its dimension equals its co-dimension.

\begin{theorem}
  \label{thm_ex_TZ}
   For each odd integer $k$, the dimension of $\Gamma_k$ is given by $2^{k-1}$.
\end{theorem}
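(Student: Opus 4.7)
The plan is to establish matching upper and lower bounds $d(\Gamma_k)\le 2^{k-1}$ and $d(\Gamma_k)\ge 2^{k-1}$, indexing the $2^{k-1}$ building blocks in both directions by the odd subsets $E\subseteq T$.

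For the upper bound I would produce, for each odd $E\subseteq T$, a weighted game $\Gamma'_E$ with quota $q_E=k-|E|+1$ and weights $w_i=1$ for $i\in S$, $w_i=0$ for $i\in E$, $w_i=2$ for $i\in T\setminus E$. Writing $a=|X\cap S|$, $b=|X\cap E|$, $c=|X\cap(T\setminus E)|$, the weight of $X$ in $\Gamma'_E$ is $a+2c$. A short case analysis on the size of $X$ shows that every coalition winning in $\Gamma_k$ has weight at least $q_E$ in every $\Gamma'_E$, while any coalition with $|X|=k$ and $X\cap T=E$ odd has weight $k-|E|<q_E$, so it is blocked. A losing coalition with $|X|<k$ is blocked by $\Gamma'_E$ either for $E=X\cap T$ (when $|X\cap T|$ is odd) or for $E=(X\cap T)\cup\{t\}$ with any $t\in T\setminus X$ (when $|X\cap T|$ is even; such a $t$ exists since $|X\cap T|\le|X|<k$). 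Hence $\Gamma_k=\bigcap_E\Gamma'_E$ and $d(\Gamma_k)\le 2^{k-1}$, with the games using only the weights $0$, $1$, and $2$.

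For the lower bound I would identify $S$ with $T$ via $i\leftrightarrow k+i$, writing $\hat E\subseteq S$ for the image of $E\subseteq T$, and look at the $2^{k-1}$ maximal losing coalitions $X_E:=E\cup(S\setminus\hat E)$, each of size $k$ with $|X_E\cap T|=|E|$ odd. The crux is a combinatorial decomposition: for any two distinct odd $E,E'\subseteq T$, the multiset $X_E+X_{E'}$ can be written as $Y_1+Y_2$ with $Y_1$ and $Y_2$ both winning of size $k$. Granting this, any weighted game $(w,q)$ appearing in an intersection representation of $\Gamma_k$ satisfies $w(Y_1)+w(Y_2)\ge 2q$, so $w(X_E)+w(X_{E'})\ge 2q$ forces at least one of the $X_E,X_{E'}$ to have weight $\ge q$; thus no single weighted game can block two distinct $X_E$, yielding $d(\Gamma_k)\ge 2^{k-1}$.

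The main obstacle is the decomposition, where the parity constraint $|Y_i\cap T|\equiv 0\pmod 2$ has to be matched. Writing $A=E\cap E'$ and $B=E\triangle E'$ inside $T$, together with $C=S\setminus(\hat E\cup\hat{E'})$ and $D=\hat E\triangle\hat{E'}$ inside $S$, one sees that $A\cup C$ has multiplicity $2$ in $X_E+X_{E'}$ and must be contained in both $Y_i$, whereas $B\cup D$ has multiplicity $1$ and must be partitioned between $Y_1$ and $Y_2$. Since $|D|=|B|$ by the shift identification and $|B|$ is even and at least $2$, a size count forces each $Y_i$ to receive exactly $|B|$ elements from $B\cup D$, and the parity condition on $|Y_i\cap T|=|A|+|Y_i\cap B|$ reduces to choosing $|Y_1\cap B|\in\{0,1,\dots,|B|\}$ with the same parity as $|A|$, which is always possible. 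This verification of the parity split is the one step that genuinely uses the structure of the construction; the remaining bookkeeping and the final trading inequality are routine.
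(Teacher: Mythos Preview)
Your proof is correct and follows essentially the same route as the paper: the lower bound is the same $2$-trade argument on the same $2^{k-1}$ special maximal losing coalitions (the paper uses the equivalent parametrisation $\{u\bar u: u\in\{0,1\}^k,\ hw(u)\text{ even}\}$, and realises the trade by flipping just one $S$-coordinate and one $T$-coordinate rather than your general partition), and the upper bound is the same family of $\{0,1,2\}$-weighted games, only indexed dually by the even subsets of $S$ instead of the odd subsets of $T$. Your uniform indexing by odd $E\subseteq T$ even absorbs the paper's special case $x=\mathbf{0}$ (it is your $E=T$), so the write-up is slightly cleaner, but the substance is the same.
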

\begin{proof}
  Let $\mathcal{C}=\left\{x\overline{x}\,:\,x\in\{0,1\}^k,\sum_{i=1}^k x_i\equiv 0\pmod 2\right\}$,
  where $\overline{x}$ denotes the negation of a binary vector and $xy$ denotes the concatenation of
  two binary vectors $x$ and $y$. We have $\mathcal{C}= L^M$, 
  $\left|\mathcal{C}\right|=2^{k-1}$,
  and we remark that the minimum Hamming distance of $\mathcal{C}$ is $4$ for $k>1$.

  For the lower bound on the dimension, we refer to \cite[Theorem 1.7.5]{TaZw99}.\footnote{Using the general
  approach and notation of \cite{Kurz2015} we can state a quick proof: For each $x,y\in\mathcal{C}$ with
  $x\neq y$ there exist indices $1\le i\le k$, $k+1\le j\le 2k$ with $x_i\neq y_i$, $x_j\neq y_j$, and $x_i\neq x_j$.
  Negating $x_i$, $x_j$, $y_i$, and $y_j$ gives two winning vectors $x'$, $y'$ with $x+y=x'+y'$, i.e., we have determined
  a $2$-trade, so that the dimension is at least $2^{k-1}$.}

  For the other direction set $\mathcal{C}^p=\left\{x\in\{0,1\}^k\,:\,\sum_{i=1}^k x_i\equiv 0\pmod 2\right\}$.
  Since $2^{1-1}=\frac{1}{2}\cdot {2\choose 1}$, we can assume $k\ge 3$. We set
  $v=\cap_{x\in \mathcal{C}^p} v_x$, where $v_x=\left[q^x;w_1^x,\dots,w_{2k}^x\right]$ with
  \begin{itemize}
    \item $w_i^x=\left\{\begin{array}{rcl} \highlight{0} &:& x_i=\highlight{1},\\\highlight{2}&:&x_i=0\end{array}\right.$
          for all $1\le i\le k$, $w_j^x=\highlight{1}$ for all $k+1\le j\le 2k$, and $q^x=\highlight{k}-(hw(x)-1)$ if $x\neq \mathbf{0}$;
    \item $w_i^x=\highlight{1}$ for all $1\le i\le k$, $w_j^x=\highlight{0}$ for all $k+1\le j\le 2k$, and $q^x=\highlight{1}$ if $x=\mathbf{0}$.
  \end{itemize}
  Let $S_1\subseteq N$ with $|S_1|\ge k+1$. For each $x\in \mathcal{C}^p\highlight{\backslash\{\mathbf{0}\}}$, we have
  \highlight{$w^x(S_1)\ge hw(x)\cdot 0 +(k+1-hw(x))\cdot 1= q^x$}. \highlight{Since $\left|S_1\cap S\right|\ge 1$, we additionally
  have $w^{\mathbf{0}}(S_1)\ge 1=q^{\mathbf{0}}$,} so that $S_1$ is winning in $v$. Now let $S_2$ be a coalition with $|S_2|=k$ and $\left|S_2\cap S\right|\equiv 0\pmod 2$.
  If $x$ is the characteristic vector of $S_2\cap S$, then
  \begin{itemize}
    \item $w^x(S_2)=hw(x)\cdot \highlight{0}+(k-hw(x))\cdot \highlight{1}=\highlight{k}-hw(x)<q^x$ for $x\neq \mathbf{0}$;
    \item $w^{x}(S_2)=\highlight{0<1}=q^x$ for $x=\mathbf{0}$,
  \end{itemize}
  so that $S_2$ is a losing coalition in $v$. Let $S_3$ be a coalition with $|S_3|=k$ and
  $\left|S_3\cap S\right|\equiv 1\pmod 2$. \highlight{Since $\left|S_3\cap S\right|\ge 1$ we have $w^{\mathbf{0}}(S_3)\ge 1=q^{\mathbf{0}}$.}
  \highlight{Now let $x\in\mathcal{C}^p\backslash\{\mathbf{0}\}$ be arbitrary. If $\left|S_3\cap S\right|<hw(x)$, then we have
  $w^x(S_3)\ge (hw(x)-1)\cdot 0+(k-hw(x)+1)\cdot 1=q^x$. If $\left|S_3\cap S\right|>hw(x)$, then }
  there exists a player $i\in S_3\cap S$ with $w_i^x=2$, so that $w^x(S_3)\ge hw(x)\cdot \highlight{0}+\highlight{2}+(k-hw(x)\highlight{-1})
  \cdot \highlight{1}=q^x$. \highlight{Thus,} $S_3$ is winning in $v$.
  Finally, let $S_4$ be a coalition of cardinality $k-1$.
  Since $k-1$ is even, we have the following two cases:
  \begin{itemize}
    \item $\left|S_4\cap S\right|\equiv 0\pmod 2$, $\left|S_4\cap T\right|\equiv 0\pmod 2$,
    \item $\left|S_4\cap S\right|\equiv 1\pmod 2$, $\left|S_4\cap T\right|\equiv 1\pmod 2$.
  \end{itemize}
  In both cases, it is possible to extend $S_4$ to a coalition $S_5\in\mathcal{L}^M$ by adding a player, so that
  $S_4$ has to be losing in $v$. Thus, we have $v=\Gamma_k$ and $\dim(\Gamma_k)\le 2^{k-1}$.
\end{proof}

Now let us restate the example of \cite[Theorem 8]{ElkindGGW08}:
Let $k$ be an odd integer and $n=2k$, $N=\{1,\dots,n\}$. Consider the simple game where all coalitions of cardinality
larger than $k$ are winning and all coalitions of cardinality smaller than $k$ are losing. A coalition $X$ of cardinality
$k$ is winning iff the Hamming distance between $X$ and $\{1,\dots,k\}$ is equivalent to $2$ modulo $4$. In other words,
this means that $\left|X\cap \{1,\dots,k\}\right|$ is even and $\left|X\cap \{k+1,\dots,n\}\right|$ is odd.

Interchanging the first $k$ players with the last $k$ players yields the example of Taylor and Zwicker.
Since Theorem~8 in \cite{ElkindGGW08} claims that the dimension is at least ${{2k}\choose k}/2$,
there is a contradiction to Theorem~\ref{thm_ex_TZ}. The flaw\footnote{We would like to thank Edith
Elkind for directly pointing to the position where the proof
breaks down in a private communication.} of
the corresponding proof happens where it says that if $x$ is
the bit vector of a losing coalition and $x_i\neq x_j$, then
switching $x_i$ and $x_j$ results in a bit vector of a
winning coalition. An explicit counter example for $n=6$ is given by
the characteristic vectors $100110$ and $010110$ which both
represent losing coalitions.

\section{From Error Correcting Codes to Simple Games}\label{sec:construction}

In this section, we present a generic recipe for constructing the
simple games forming the basis for our results. Throughout the
paper, we let $\mathcal{C} \subseteq \{0, 1\}^n$ denote a set of bit
vectors of length $n$ having positive Hamming weight satisfying this
condition:
\begin{equation}
\label{eq:C}
\forall x \neq y \in \mathcal{C}: |hw(x)-hw(y)| < d(x,y)-2
\end{equation}

For $x \in \mathcal{C}$, we define the simple game $\Gamma_x$ with
players $N = \{1, 2, \ldots, n\}$ as follows: $S$ wins in $\Gamma_x$ if and only if $S
\cap S_x \neq \emptyset$. The simple game $\Gamma_{\mathcal{C}}$ is
now defined by $\Gamma_{\mathcal{C}} = \cap_{x \in \mathcal{C}}
\Gamma_x$. In other words, a set $S$ is winning if and only if $S$
is a so-called {\em hitting set} for the collection of sets
$\{S_x\}_{x \in \mathcal{C}}$.

The error-correcting code $\mathcal{C}_8$ from
Example~\ref{ex:hamming} is a set of bit vectors satisfying
(\ref{eq:C}). Another example is the following:
\begin{example} \label{ex:game}
Let $\mathcal{C}$ be defined as follows for $n=8$: $$ \mathcal{C} = \{0000\
1111, 1100\ 0000, 0011\ 1100\}$$ The Hamming weights of the vectors
$0000\ 1111$ and $1100\ 0000$ differ by $2$ but their Hamming
distance is $6$. So $(\ref{eq:C})$ holds for these vectors.
Coalition $\{1, 5\}$ is winning in
$\Gamma_{\mathcal{C}}$, since it intersects the sets $\{5,6,7,8\}$,
$\{1, 2\}$ and $\{3, 4, 5, 6\}$. The bit vector $1000\ 1000$ that
corresponds to the set $\{1, 5\}$ shares at least one $1$-bit with
all members of $\mathcal{C}$. \end{example}

\subsection{A Dimension Lemma}

We now prove a lemma explicitly stating the dimension of our games.

\begin{lemma}\label{lem:dim} The dimension of $\Gamma_{\mathcal{C}}$ is
$|\mathcal{C}|$. \end{lemma}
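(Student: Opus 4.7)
The plan is to prove both bounds $\dim(\Gamma_{\mathcal{C}}) \leq |\mathcal{C}|$ and $\dim(\Gamma_{\mathcal{C}}) \geq |\mathcal{C}|$ separately. The upper bound comes for free: the definition $\Gamma_{\mathcal{C}} = \bigcap_{x \in \mathcal{C}} \Gamma_x$ already expresses $\Gamma_{\mathcal{C}}$ as an intersection of $|\mathcal{C}|$ weighted games, since each $\Gamma_x$ is weighted with weight $1$ on $S_x$, weight $0$ elsewhere, and quota $1$.

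The substance lies in the lower bound. First I would identify the maximal losing coalitions. A coalition $T$ is losing iff $T \subseteq \overline{S_x}$ for some $x \in \mathcal{C}$, so the candidates for $L^M$ are the sets $\overline{S_x}$. I would check they form an antichain: if $S_x \subsetneq S_y$ for distinct $x, y \in \mathcal{C}$, then $d(x,y) = hw(y)-hw(x) = |hw(x)-hw(y)|$, contradicting (\ref{eq:C}). From this it follows that $\overline{S_x} \cup \{i\}$ is winning for every $i \in S_x$ (it hits $S_x$ via $i$ and each other $S_y$ via $S_y \setminus S_x \neq \emptyset$), so $L^M = \{\overline{S_x}: x \in \mathcal{C}\}$ and $|L^M| = |\mathcal{C}|$.

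The centrepiece is a $2$-trade. I would show that in any representation $\Gamma_{\mathcal{C}} = \bigcap_{i=1}^d \Gamma_i$ by weighted games, no single $\Gamma_i$ can reject two distinct maximal losing coalitions, forcing $d \geq |L^M|$. Fix $x \neq y$ in $\mathcal{C}$ and partition $N$ into $I_{00}, I_x, I_y, I_{xy}$ by the bit pattern of $(x,y)$. From $|I_x|+|I_y| = d(x,y)$ and $|I_x|-|I_y| = hw(x)-hw(y)$, condition (\ref{eq:C}) immediately gives $|I_x|, |I_y| \geq 2$. I would then pick any $i \in I_x$, $j \in I_y$, set $A' := \overline{S_x} \triangle \{i,j\}$ and $B' := \overline{S_y} \triangle \{i,j\}$, and verify the multiset identity $\mathbf{1}_{A'}+\mathbf{1}_{B'} = \mathbf{1}_{\overline{S_x}}+\mathbf{1}_{\overline{S_y}}$ coordinate by coordinate. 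Granted that $A', B'$ are winning in $\Gamma_{\mathcal{C}}$, the classical trading argument applies: any weighted $\Gamma_i$ with weights $w$ and quota $q$ that rejects both $\overline{S_x}$ and $\overline{S_y}$ would satisfy $2q > w(\overline{S_x})+w(\overline{S_y}) = w(A')+w(B') \geq 2q$, a contradiction.

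The main obstacle is verifying that $A'$ and $B'$ are winning. For $z = x$ or $z = y$ this is a direct check, with $|I_x|,|I_y|\geq 2$ again crucial. For $z \in \mathcal{C}\setminus\{x,y\}$ I would argue by contradiction: assuming $A'\cap S_z = \emptyset$ and using $\overline{S_x}\cap S_z \neq \emptyset$, the two toggled positions force $\overline{S_x}\cap S_z = \{j\}$ and $z_i=0$. Writing $a := |S_z \cap S_x|$ one computes $hw(z) = a+1$, $d(x,z) = hw(x)-a+1$, and $a \leq hw(x)-1$, whence $|hw(x)-hw(z)| = d(x,z)-2$, violating the strict inequality in (\ref{eq:C}). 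The symmetric computation disposes of $B'$, completing the $2$-trade and hence the lemma.
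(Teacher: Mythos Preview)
Your argument is correct and follows essentially the same route as the paper: the trivial upper bound, the losing coalitions $L_x=\overline{S_x}$, a pigeonhole step, the same single-player swap between $L_x$ and $L_y$, and the weight-sum contradiction. The only cosmetic differences are that you explicitly identify $L^M$ and handle $z=y$ separately with the bound $|I_x|,|I_y|\ge 2$, whereas the paper treats all $z\neq x$ uniformly via Hamming-distance identities; both computations collapse to the same inequality $hw(x)-hw(z)\ge d(x,z)-2$ contradicting~(\ref{eq:C}).
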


\begin{proof} The game $\Gamma_x$, $x \in \mathcal{C}$, is clearly
weighted, so the dimension of $\Gamma_{\mathcal{C}}$ is not higher
than $|\mathcal{C}|$.

We now assume that the dimension of $\Gamma_{\mathcal{C}}$ is less
than $|\mathcal{C}|$. Let $L_x = N \setminus S_{x}$ for $x \in
\mathcal{C}$. The coalition $L_x$ is clearly a losing coalition in
$\Gamma_{\mathcal{C}}$ because $L_x \cap S_x = \emptyset$. Using the
pigeonhole principle, we conclude that there are $x,y \in
\mathcal{C}$ with $x \neq y$ such that $L_x$ and $L_y$ lose in the
same weighted game $\Gamma^\prime$, where $\Gamma^\prime$ is one of
the less than $|\mathcal{C}|$ weighted games whose intersection is
$\Gamma_{\mathcal{C}}$.

By considering basic properties for the Hamming distance and the
Hamming weight, we observe that (\ref{eq:C}) also holds if we replace
$x$ and $y$ with their complements $\bar{x}$ and $\bar{y}$. If one
of the vectors $\bar{x}$ or $\bar{y}$ had all $1$-bits in the
$d(\bar{x},\bar{y})$ positions, where the two vectors differ, then the
left-hand side of (\ref{eq:C}) would be $d(\bar{x},\bar{y})$ and
(\ref{eq:C}) would not hold. We therefore conclude that there are
players $p_x \in L_x \setminus L_y$ and $p_y \in L_y \setminus L_x$.
We let $A$ and $B$ be the coalitions obtained if $L_x$ and $L_y$
swap these players: $A= (L_x \setminus \{ p_x \}) \cup \{ p_y \}$
and $B = (L_y \setminus \{ p_y \}) \cup \{ p_x \}$.

We now show that $A$ and $B$ are winning coalitions in
$\Gamma_{\mathcal{C}}$. Without loss of generality, we consider the
coalition $A$. It is clear that $x_A$ and $x$ share a $1$-bit so $A$
wins in $\Gamma_x$. Now let us assume that there is a member $z$ of
$\mathcal{C} \setminus \{x\}$ such that $A$ loses in $\Gamma_z$. In
other words, $x_A$ and $z$ do not share a $1$-bit. The vector $x_A$
is obtained by flipping a $0$-bit and a $1$-bit in the vector
$\bar{x}$: \begin{equation} \label{eq:L1_1} d(x_{A}, \bar{z}) \geq
d(\bar{x}, \bar{z}) - 2 \enspace . \end{equation} The $d(x_A,
\bar{z})$ bits shared by $x_A$ and $z$ are all $0$ in which case we
have the following: \begin{equation} \label{eq:L1_2} d(x_A, \bar{z})
+ hw(x_A) + hw(z) = n \enspace . \end{equation} We now use $hw(x_A)
= n - hw(x)$ together with (\ref{eq:L1_2}): \begin{equation}
\label{eq:L1_3} d(x_A, \bar{z}) = hw(x)-hw(z) \enspace .
\end{equation} By using $d(x, z) = d(\bar{x}, \bar{z})$ and
(\ref{eq:L1_1}) and (\ref{eq:L1_3}), we obtain the following
inequality: \begin{equation} \label{eq:L1_4} hw(x)-hw(z) \geq d(x,
z) - 2 \enspace . \end{equation} Since (\ref{eq:L1_4}) contradicts
(\ref{eq:C}), we conclude that $A$ wins in $\Gamma_z$ for any $z \in
\mathcal{C}$. Consequently, $A$ also wins in $\Gamma_{\mathcal{C}}$.

Summing up, we now have two coalitions $L_x$ and $L_y$ that lose in
$\Gamma^\prime$, and we can obtain two winning coalitions in
$\Gamma_{\mathcal{C}}$ if $L_x$ and $L_y$ swap two players. These
coalitions also win in $\Gamma^\prime$ and we obtain a contradiction,
since this would mean that the total weight in $\Gamma^\prime$ of
the players in $L_x$ and $L_y$ has increased.
\end{proof}
It is worth noting that the dimension of the game
$\Gamma_{\mathcal{C}}$ is $|L^M|$ since $L^M = \{ L_x \}_{x \in \mathcal{C}}$.

If we can construct games with
dimension $m$ using our approach, we can also construct games with
dimension $m'$ for every $m' \leq m$ as expressed by the following
corollary:

\begin{corollary}
Let $\Gamma_\mathcal{C}$ be a simple game
with $n$ players and dimension $m$, then there are simple games
with $n$ players and dimension $m'$, $1 \leq m' \leq m$.
\end{corollary}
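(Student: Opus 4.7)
The plan is to exploit the fact that condition (\ref{eq:C}) is a purely pairwise, universally quantified property: if it holds for every pair of distinct vectors in $\mathcal{C}$, then it automatically holds for every pair of distinct vectors in any subset $\mathcal{C}' \subseteq \mathcal{C}$. Moreover the requirement that each member of $\mathcal{C}$ has positive Hamming weight is inherited by subsets as well. So the construction of Section~\ref{sec:construction} applies verbatim to any subset $\mathcal{C}' \subseteq \mathcal{C}$ and produces a simple game $\Gamma_{\mathcal{C}'}$ on the same set $N$ of $n$ players.

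Given $m'$ with $2 \leq m' \leq m = |\mathcal{C}|$, I would then choose an arbitrary subset $\mathcal{C}' \subseteq \mathcal{C}$ with $|\mathcal{C}'| = m'$ and invoke Lemma~\ref{lem:dim} to conclude $\dim(\Gamma_{\mathcal{C}'}) = |\mathcal{C}'| = m'$. For the boundary case $m' = 1$ I would pick any single vector $x \in \mathcal{C}$ and take $\Gamma_x$ itself; this game is weighted by construction and non-trivial, since $hw(x) > 0$ guarantees the existence of a winning coalition that intersects $S_x$ and of the losing coalition $N \setminus S_x$, so $\dim(\Gamma_x) = 1$.

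There is essentially no technical obstacle. The only point deserving a brief comment is that the proof of Lemma~\ref{lem:dim} uses the pigeonhole principle on a collection of losing coalitions indexed by $\mathcal{C}$ and therefore implicitly needs $|\mathcal{C}| \geq 2$; this is precisely why the $m' = 1$ case has to be handled by the separate trivial argument above. With that split in place, the corollary is an immediate consequence of Lemma~\ref{lem:dim} applied to the chosen subset $\mathcal{C}'$.
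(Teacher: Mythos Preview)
Your proposal is correct and follows exactly the same approach as the paper: simply pass to a subset of $\mathcal{C}$ and reapply Lemma~\ref{lem:dim}. Your separate treatment of the case $m'=1$ is harmless but unnecessary, since for $|\mathcal{C}'|=1$ the assumption ``dimension $<|\mathcal{C}'|$'' in the proof of Lemma~\ref{lem:dim} is vacuously impossible, so the lemma already covers that case as well.
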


\begin{proof}
Just delete some elements from $\mathcal{C}$.
\end{proof}




\section{Simple Games with High Dimension}\label{sec:hdproof}

The key question we will deal with in this section is the following:
Can we find families $\mathcal{C}$ of bit vectors with high
cardinality satisfying (\ref{eq:C})? According to
Lemma~\ref{lem:dim}, this would automatically give us games with
high dimension. From the theory on error-correcting codes, we know
how to construct relatively large families of bit vectors forming
SECDED constant weight codes. If we pick such a code, we clearly have
a family $\mathcal{C}$ satisfying (\ref{eq:C}). This observation is
the basis for the proofs in this section. As an example, the code
$\mathcal{C}_8$ from Example~\ref{ex:hamming} corresponds to a
simple game with $8$ players and dimension $14$.

It is important to stress that constant weight SECDED codes are not
the only families satisfying the generic recipe (\ref{eq:C}) as
illustrated by Example~\ref{ex:game}. There are many other families
that satisfy (\ref{eq:C}), but we will use constant weight SECDED
codes to construct our games with high dimension. In other words,
there might be families with larger cardinalities compared to
constant weight SECDED codes satisfying (\ref{eq:C}).

Agrell et al. \cite{Agrell00} present lower bounds for
cardinalities of constant weight SECDED codes. These lower bounds
can be directly translated to lower bounds for dimensions for simple
games if we use Lemma~\ref{lem:dim}. This allows us to set up
Table~\ref{tab:bounds} that compares the dimensions of the games
produced using composition of unanimity games~\cite{Freixas2001265}
with the dimensions of the games based on our approach and the lower
bounds from \cite{Agrell00}. The first column displays $n$. The
second column presents the dimensions of the games from
\cite{Freixas2001265} and~\cite{TaZw99}. The third column contains
the dimensions of the games produced u\-sing our approach and constant
weight SECDED codes. Finally, the last column shows the, slightly improved, upper bound
${{n}\choose{{\lfloor n/2\rfloor}}}-1$.\footnote{Sperner's Theorem also classifies the cases
where his bound is tight. Since all of the corresponding simple games are indeed
weighted, the previous upper bound can be reduced by $1$.} As an example, we can see that
our approach leads to a simple game with dimension $14$ for $n=8$ --
the game $\Gamma_{\mathcal{C}_8}$.
\begin{table}
\centering
\caption{A comparison of the dimensions of the games
produced using composition of unanimity games and the dimensions of the games based on our approach.}
\begin{tabular}{|r|c|c|c|}
\hline
$n$ & Unanimity games & Our approach & ${{n}\choose{{\lfloor n/2\rfloor}}}-1$ \\
\hline
6 &  4 & 4 &  19\\
7 &  4 & 7 &  34\\
8 &  8 & 14 &  69\\
9 &  9 &  18 &  125\\
10 &  16 &  36 &  251\\
11 &  18 &  66 &  461 \\
12 &  32 &  132 &  923\\
13 &  36 &  166 &  1715\\
14 &  64 &  325 &  3431\\
15 &  81 &  585 &  6434\\
16 &  128 &  1170 &  12869\\
17 &  162 &  1770 &  24309\\
18 &  256 &  3540 &  48619\\
19 &  324 &  6726 &  92377\\
20 &  512 &  13452 &  184755\\
\hline
\end{tabular}
\label{tab:bounds}
\end{table}

We are now ready to consider all other values of $n$. Initially, we
consider the case where $n$ is a power of $2$. The following lemma
generalizes the example described earlier
with $|\mathcal{C}_8|= 14$ for $n=8$ to $n=2^m$ for $m \geq 3$.
\begin{lemma}
\label{lem:pow_2}
Let $n = 2^m$ where $m$ is an integer, $m \geq 3$. There is a set of
bit vectors $\mathcal{C} \subseteq \{0, 1\}^n$ satisfying
(\ref{eq:C}) with
\begin{equation}
\label{eq:lem_pow_2}
|\mathcal{C}| = \frac{2}{n} \left( \frac{1}{2}{n \choose \frac{n}{2}} + (n-1){\frac{n}{2}-1 \choose \frac{n}{4}} \right) \enspace .
\end{equation}
\end{lemma}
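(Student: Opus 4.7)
The plan is to take $\mathcal{C}$ to be the set of all codewords of weight exactly $n/2$ in the extended Hamming code $\widehat{\mathcal{H}}_m$, a binary linear $[n, n-m-1, 4]$ code whose dual is the first-order Reed--Muller code $RM(1,m)$. Since every element of $\mathcal{C}$ has Hamming weight $n/2$, the left-hand side of (\ref{eq:C}) is $0$ for any two distinct elements; and since $\mathcal{C} \subseteq \widehat{\mathcal{H}}_m$ which has minimum distance $4$, the right-hand side is at least $2 > 0$. So (\ref{eq:C}) holds automatically.

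It then remains to count $|\mathcal{C}| = A_{n/2}(\widehat{\mathcal{H}}_m)$. The dual $RM(1,m)$ consists of $\mathbf{0}$, $\mathbf{1}$, and $2n - 2$ non-constant affine functions which are all balanced, so its weight enumerator is $W^\perp(x,y) = x^n + y^n + (2n-2)\,x^{n/2}y^{n/2}$. Applying the MacWilliams identity gives
$$W(x,y) = \frac{1}{2n}\Bigl[(x+y)^n + (x-y)^n + (2n-2)(x^2-y^2)^{n/2}\Bigr],$$
and reading off the coefficient of $x^{n/2}y^{n/2}$ yields
$$A_{n/2} = \frac{1}{2n}\Bigl(2\binom{n}{n/2} + (2n-2)\,(-1)^{n/4}\binom{n/2}{n/4}\Bigr).$$
For $m \geq 3$ we have $n/4 = 2^{m-2}$ even, so the sign is $+1$, and the expression simplifies to $\frac{1}{n}\binom{n}{n/2} + \frac{n-1}{n}\binom{n/2}{n/4}$.

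Finally, the identity $\binom{n/2}{n/4} = 2\binom{n/2-1}{n/4}$, which follows from Pascal's rule combined with the symmetry $\binom{n/2-1}{n/4} = \binom{n/2-1}{n/4-1}$ (valid because $n/4$ is the midpoint of $\{0, \ldots, n/2-1\}$), rewrites the count in the form (\ref{eq:lem_pow_2}).

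The main obstacle is the MacWilliams computation, and in particular the parity argument that pins down $(-1)^{n/4} = +1$; this is precisely where the hypothesis $m \geq 3$ enters, since for $m = 2$ the sign would flip. Converting the resulting expression into the exact shape stated in the lemma is then a routine binomial manipulation, and verifying (\ref{eq:C}) is essentially free from the constant-weight and minimum-distance-$4$ properties of $\widehat{\mathcal{H}}_m$.
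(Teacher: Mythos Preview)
Your proof is correct and follows essentially the same approach as the paper: both take $\mathcal{C}$ to be the weight-$n/2$ codewords of the extended Hamming code of length $n=2^m$, observe that constant weight together with minimum distance $4$ makes (\ref{eq:C}) immediate, and then count $|\mathcal{C}|$ via the weight enumerator. The only difference is in how that count is obtained: the paper quotes a closed-form enumerator for the (unextended) Hamming$[t,t-m]$ code from \cite{Wicker1994} and passes to the extension by summing the two adjacent coefficients $a_i+a_{i+1}$ with $i=n/2-1$, whereas you compute the enumerator of the extended code directly via the MacWilliams identity applied to its dual $RM(1,m)$; both routes lead to the same expression, and the final binomial manipulation is identical.
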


\begin{proof}
Let $t = 2^m - 1$. The enumerator polynomial for an error-correcting
code is a polynomial, where the $i$'th coefficient, $a_i$, is the
number of bit vectors of Hamming weight $i$. According to
\cite{Wicker1994}, the enumerator polynomial for the well-known
Hamming[$t$,$t-m$] code that contains bit vectors of length $t$ is:
$$A(x) = \frac{(1+x)^t+t(1-x)(1-x^2)^{(t-1)/2}}{t+1} \enspace .$$

Let $i = \frac{t-1}{2} $ ($i= 2^{m-1}-1$ is odd and $i+1$ is even):
$$a_i = \frac{1}{t+1} \left( {t \choose i} + t(-1)^{\frac{i+1}{2}}{i \choose \frac{i-1}{2}} \right)$$
$$a_{i+1} = \frac{1}{t+1} \left( {t \choose i+1} + t(-1)^{\frac{i+1}{2}}{i \choose \frac{i+1}{2}} \right) = a_i$$
The extended code Hamming[$t+1$,$t-m$] is a SECDED code. We can now
let $\mathcal{C}$ be the subset of the extended code containing the
bit vectors with Hamming weight $\frac{n}{2}$. This is a constant
weight SECDED code satisfying (\ref{eq:C}).

Set $n = t+1 = 2i + 2$. The number of bit vectors in the extended
code with Hamming weight $\frac{n}{2} = i+1$ is $a_i + a_{i+1} =
2a_{i+1}$:
$$ 2a_{i+1} = \frac{2}{n} \left( {n-1 \choose \frac{n}{2}} + (n-1)(-1)^{\frac{n}{4}}{\frac{n}{2}-1 \choose \frac{n}{4}} \right) $$

For $n \geq 8$, we have:
$$ 2a_{i+1} = \frac{2}{n} \left( {n-1 \choose \frac{n}{2}} + (n-1){\frac{n}{2}-1 \choose \frac{n}{4}} \right) \enspace .$$

We now use:
$$  {n \choose \frac{n}{2}} = {n-1 \choose \frac{n}{2}} + {n-1 \choose \frac{n}{2}-1} = 2 {n-1 \choose \frac{n}{2}} $$
to obtain 
$$ 2a_{i+1} = \frac{2}{n} \left( \frac{1}{2}{n \choose \frac{n}{2}} + (n-1){\frac{n}{2}-1 \choose \frac{n}{4}} \right) \enspace .$$
\end{proof}

We now state our main theorem, where we also consider values of $n$ that are not powers of $2$.

\begin{theorem}
 \label{thm:main}
 For any positive integer $n$ there is a simple game with $n$ players and dimension $d$ satisfying:
\begin{equation}
\label{eq:thm_generic}
d \geq \frac{1}{n} {n \choose \lfloor \frac{n}{2} \rfloor} \enspace .
\end{equation}
If $n = 2^m$ for an integer $m \geq 3$, then there is a simple game with $n$ players and dimension $d$ such that
\begin{equation}
\label{eq:thm_pow_2}
d = \frac{1}{n}{n \choose \frac{n}{2}} + \frac{2(n-1)}{n}{\frac{n}{2}-1 \choose \frac{n}{4}} \enspace .
\end{equation}
\end{theorem}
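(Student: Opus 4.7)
The plan is to derive both parts of the theorem from the framework of Section~\ref{sec:construction}: by Lemma~\ref{lem:dim}, it suffices to exhibit, for each target value, a family $\mathcal{C}\subseteq\{0,1\}^n$ satisfying condition~(\ref{eq:C}) of the required cardinality. I would begin by noting the (trivial but repeatedly useful) observation that any constant-weight code of minimum Hamming distance at least $4$ satisfies~(\ref{eq:C}), since in that case $|hw(x)-hw(y)|=0<2\le d(x,y)-2$ for all distinct $x,y\in\mathcal{C}$.

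For the power-of-two case~(\ref{eq:thm_pow_2}), Lemma~\ref{lem:pow_2} already delivers a family $\mathcal{C}$ of the required cardinality, built from the weight-$n/2$ codewords of the extended Hamming$[n,n-1-\log_2 n]$ code. All that remains is to rearrange~(\ref{eq:lem_pow_2}) by distributing the outer factor $\tfrac{2}{n}$ across the bracket; this identifies the expression with the right-hand side of~(\ref{eq:thm_pow_2}).

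For the generic lower bound~(\ref{eq:thm_generic}), I would use a pigeonhole construction. Consider the set of all bit vectors of length $n$ with Hamming weight exactly $\lfloor n/2\rfloor$, and partition them into $n$ classes according to the residue $\sum_{i=1}^n i\cdot x_i \pmod n$. The largest class $\mathcal{C}$ has cardinality at least $\frac{1}{n}\binom{n}{\lfloor n/2\rfloor}$ and is trivially constant-weight; by the opening observation it then suffices to verify that the minimum Hamming distance of $\mathcal{C}$ is at least $4$. Since two distinct vectors of equal weight must have even Hamming distance, only the case $d(x,y)=2$ needs to be ruled out. If such $x,y\in\mathcal{C}$ existed, they would differ in exactly two positions $a$ (flipped $1\!\to\!0$) and $b$ (flipped $0\!\to\!1$), yielding $\sum_i i\cdot x_i-\sum_i i\cdot y_i=a-b$; membership in the same class would force $a\equiv b\pmod n$, hence $a=b$ since $1\le a,b\le n$, a contradiction.

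The only real subtlety is the distance-$2$ exclusion in the pigeonhole step; the remainder is a direct invocation of Lemma~\ref{lem:dim} together with either Lemma~\ref{lem:pow_2} or the pigeonhole construction, plus elementary algebra. I expect no serious obstacle, as the construction cleanly replaces the role played by extended Hamming codes for lengths that are not powers of two.
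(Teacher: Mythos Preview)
Your proposal is correct and follows the same route as the paper: both parts reduce to Lemma~\ref{lem:dim} applied to a constant-weight code of minimum distance at least~$4$, with Lemma~\ref{lem:pow_2} handling the power-of-two case. For~(\ref{eq:thm_generic}) the paper simply cites the Graham--Sloane bound~\cite{Graham80}, whereas you spell out their pigeonhole construction explicitly; this is not a different approach but an inlining of the cited result.
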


\begin{proof}
Graham and Sloane~\cite{Graham80} have shown that there is
constant weight SECDED code with Hamming weight $w$ with cardinality
at least $\frac{1}{n}{n \choose w}$ for any $w$. For $w = \lfloor
\frac{n}{2} \rfloor$, we get (\ref{eq:thm_generic}) by using
Lemma~\ref{lem:dim}. Lemma~\ref{lem:dim} and Lemma~\ref{lem:pow_2}
give us (\ref{eq:thm_pow_2}).
\end{proof}
It follows from (\ref{eq:sperner}) and (\ref{eq:sperner2}) that the
lower bound presented in Theorem~\ref{thm:main} is $2^{n-o(n)}$. Our
games are easily seen to be within a factor $n$ from the upper bound
from (\ref{eq:d-upperbound}). Finally, we point out that the proof of the
lower bound in \cite{Graham80} is constructive.

\section{Conclusion}
We have presented a link from coding theory to the dimension of simple games. We are
not aware of any other connection between coding theory and simple games. While it seems
a rather tough problem to determine the exact dimension of a simple game, 
we have provided an exact formula for those simple games arising from error correcting codes in Lemma~\ref{lem:dim}.
Via this connection, any improvement on lower bounds of constant weight codes improves the
stated lower bounds for the worst-case dimensions of simple games. For the other direction, it
would be interesting to know whether unrestricted codes satisfying inequality~(\ref{eq:C}) have
some application in coding theory. Till now, it is even unclear, at least for us, if those codes can
be strictly larger than constant weight codes. From our point of view, this connection should be explored
in more detail.

The asymptotic magnitude of the worst-case examples with respect to the dimension of simple games is determined,
which closes a gap in the literature and uncovers a flaw in a previous attempt. The bad news is that from a
representation complexity point of view, the usage of intersections of weighted games cannot be a solution
for all cases. From a practical point of view, one may nevertheless ask whether the set of weighted games with
\textit{small} dimension are not \textit{too far apart} from the set of simple games, so that there is
no reason to use high-dimensional simple games in reality.

From a mathematical point of view, it would be interesting to determine the exact values of the worst-case examples.

The construction of the representing weighted games is still widely open and deserves further attention.

\end{document}